\newcommand{\R}{ {\mathbb{R}} }
\newcommand{\E}{ {\mathbb{E}} }
\renewcommand{\P}{ {\mathbb{P}} }
\newtheorem{theorem}{Theorem}\theoremstyle{plain}
\renewcommand{\and}{\quad\textrm{ and }\quad}
\renewcommand{\P}{\mathbb{P}}
\newcommand{\N}{\mathbb{N}}
\newtheorem{lemma}{Lemma}\newtheorem{proposition}{Proposition}\numberwithin{equation}{section}
\title{On the martingale property 
 in the rough Bergomi model}
\author{Paul Gassiat}
\address{
Universit\'e Paris-Dauphine, PSL University, UMR 7534, CNRS, CEREMADE, 75016 Paris, France}
\email{gassiat@ceremade.dauphine.fr}
\begin{document}

\begin{abstract}
We consider a class of fractional stochastic volatility models (including the so-called rough Bergomi model), where the volatility is a superlinear function of a fractional Gaussian process. We show that the stock price  is a true martingale if and only if the correlation $\rho$ between the driving Brownian motions of the stock and the volatility is nonpositive. We also show that for each $\rho<0$ and $m> \frac{1}{{1-\rho^2}}$, the $m$-th moment of the stock price is infinite at each positive time.

\end{abstract}

\thanks{The author is grateful to P. K. Friz for related discussions. This work is partially supported by the ANR via the project ANR-16-CE40- 0020-01. }

\maketitle


\section{Introduction and main results} \label{sec:Intro}

We are interested in fractional stochastic volatility models where the dynamics of (discounted) stock price under a risk-neutral measure take the form 
\begin{equation}
dS_{t}/S_{t}=\sigma(t,Y_t) dW_t%
  \label{equ:S}
\end{equation}%
\begin{equation}
Y_{t} =\int_{0}^{t} K(t,s) dZ_{s}
\label{eq:Y}
\end{equation}
where $Z_t =\rho W_{t}+\bar{\rho}\bar{W}_t$, $W,\bar{W}$ are two independent Brownian motions on a filtered probability space $\left(\Omega,(\mathcal{F}_t)_{t\geq0}, \mathbb{P}\right)$, and $\rho ^{2}+\bar{\rho}^{2}=1$.

A specific example we have in mind is the so-called Rough Bergomi model introduced in \cite{BFG16}. In that model $Y$ is a Riemann-Liouville fractional Brownian motion of Hurst parameter $H \in (0,1)$, i.e. 
$$K(t,s) = C_H (t-s)^{H - \frac1 2}$$
and the volatility function takes the form
$$\sigma(t,y) = \zeta(t) \exp\left( \eta y \right)$$
where $\eta >0$ and $\zeta$ is a continuous function of $t$. The rough Bergomi model with $H \in (0,\frac{1}{2})$ is part of a larger class of fractional stochastic volatility models (so-called "rough volatility models") which has been recently observed to reproduce several features of historical \cite{GJR18} and pricing \cite{ALV07,F11,BFG16} data, and has been the subject of intense  recent academic activity\footnote{See for instance the website \url{https://sites.google.com/site/roughvol/home} for an up to date listing of the relevant literature.}.

The first question we consider in this note is whether the price process $S$, which is obviously a local martingale (and a supermartingale) is a true martingale. The true martingale property is very important in practice, since using a strict local martingale measure for pricing has some obvious drawbacks. For instance : if $S$ is a strict local martingale then $\E[S_T]<S_0$ for some $T>0$, so that already the price given by the model for holding one unit of stock until time $T$ does not coincide with market data (this suggests that the asset price is greater than its actual "fundamental" value and for this reason strict local martingale models have been used in the modelisation of bubbles, see \cite{Pro13} and references therein).

Note that in the rough Bergomi model, due to the superlinear growth of $\sigma$, Novikov's criterion for martingality is never satisfied. Nevertheless we show that if the correlation is nonpositive (which is typically the case in actual applications), the price process is indeed a true martingale. Actually our result does not rely on the specific form of $\sigma$ in the rough Bergomi model, and only requires a (rather weak) assumption on $K$ and $\sigma$. 
We also show the converse implication in the case of a Riemann-Liouville type kernel, under a more specific assumption  of superlinear growth of $\sigma$. 

\begin{theorem} \label{thm:mart}
(1) Assume that  the kernel $K$ is such that \eqref{equ:sigma} defines a Gaussian process with continuous sample paths, $\sigma : [0,\infty) \times \R \to \R_+$ is continuous and bounded on $[0,T]\times (-\infty, a]$ for each $T$, $a$ $>$ $0$. Then if $\rho \leq 0$, $(S_t)_{t \geq 0}$ defined by \eqref{equ:S}-\eqref{eq:Y} is a true martingale. 

(2) Assume in addition that there exists $T_0 >0$ such that for some $\alpha > \frac{1}{2}$, 
$$\forall 0\leq s\leq t \leq T_0, \;\; K(t,s) { = K(t-s,0)} \geq \alpha (t-s)^{\alpha-1}$$
and $\sigma \geq \sigma_0$ on $[0,T_0] \times \R$, where $\sigma_0 : [0,T_0] \times \R \to \R_+$ is  continuous,  nondecreasing in $x$, locally Lipschitz in $x$ (uniformly in $t \in [0,T_0]$)  and such that for some $A>0$,
 \begin{equation}
\label{eq:asnG}
\int_A^{+\infty} \left(\frac{w}{\inf_{t \in [0,T_0]} \sigma_0(t,w)}\right)^{\frac{1}{\alpha}} \frac{dw}{w} < + \infty,
\end{equation}
Then if $\rho>0$, for each $t>0$, one has $\E[S_t] < S_0$.\end{theorem}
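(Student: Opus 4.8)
The strategy is the one familiar from the literature on loss of the martingale property (Sin, Delbaen--Shirakawa, Jourdain): localize, use Girsanov to give the volatility-driving process a positive drift, and show that the superlinear feedback makes this drifted process explode before time $t$ with positive probability, so that $S$ loses a fixed fraction of its mass. Since $S$ is a nonnegative supermartingale, $\E[S_t]\le\E[S_{T_0}]$ for $t\ge T_0$, so it suffices to treat $t\in(0,T_0]$.

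Write $S=S_0 L$ with $L=\mathcal E\big(\int_0^\cdot\sigma(s,Y_s)\,dW_s\big)$ and set $\tau_n:=\inf\{s:\int_0^s\sigma(r,Y_r)^2\,dr\ge n\}$. Since $Y$ is continuous and $\sigma$ continuous, $\int_0^s\sigma(r,Y_r)^2\,dr<\infty$ a.s. for all $s$, hence $\tau_n\uparrow\infty$ a.s. and $L^{\tau_n}$ is a true martingale by Novikov. Defining $\widetilde{\P}_n$ on $\mathcal F_t$ by $d\widetilde{\P}_n/d\P=L_{t\wedge\tau_n}$, monotone convergence gives
$$\E[S_t]/S_0=\lim_{n\to\infty}\E\big[L_t\mathbf 1_{\{\tau_n>t\}}\big]=\lim_{n\to\infty}\widetilde{\P}_n(\tau_n>t).$$
Under $\widetilde{\P}_n$, Girsanov turns $W$ into a Brownian motion with drift $\sigma(\cdot,Y_\cdot)$ up to $\tau_n$; as $\bar W$ is orthogonal to the density it remains a Brownian motion, so $Z=\rho W+\bar\rho\bar W$ becomes $Z^n_s+\rho\int_0^{s\wedge\tau_n}\sigma(r,Y_r)\,dr$ with $Z^n$ a $\widetilde{\P}_n$-Brownian motion, and therefore, for $s<\tau_n$,
$$Y_s=\widehat{Y}^n_s+\rho\int_0^s K(s,r)\,\sigma(r,Y_r)\,dr,\qquad\widehat{Y}^n_s:=\int_0^s K(s,r)\,dZ^n_r,$$
where $\widehat{Y}^n$ has, under $\widetilde{\P}_n$, exactly the law that $Y$ has under $\P$ (in particular independent of $n$).

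It now suffices to produce $L$ (depending on $t$, not on $n$) and $p_0>0$ such that, under $\widetilde{\P}_n$, the event $G:=\{\widehat{Y}^n_s\ge L\ \forall s\in[t/2,t]\}$ forces $\tau_n<t$: indeed $\widetilde{\P}_n(G)=\P(Y_s\ge L\ \forall s\in[t/2,t])=:p_0$ is positive and $n$-independent (a Cameron--Martin/support argument: the Cameron--Martin element $s\mapsto c\int_0^s K(s,r)\,dr\ge c\,s^\alpha$ exceeds $L$ on $[t/2,t]$ for $c$ large), whence $\widetilde{\P}_n(\tau_n>t)\le 1-p_0$ for all large $n$ and $\E[S_t]\le(1-p_0)S_0<S_0$. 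On $G$, for $s\in[t/2,t\wedge\tau_n)$ we have $Y_s\ge\widehat{Y}^n_s\ge L$; using $K(s,r)\ge\alpha(s-r)^{\alpha-1}$, $\sigma\ge\sigma_0$, $\sigma_0(r,w)\ge\phi(w):=\inf_{\theta\in[0,T_0]}\sigma_0(\theta,w)$ ($\phi$ nondecreasing, continuous, locally Lipschitz), and dropping the nonnegative contribution of $[0,t/2]$,
$$Y_s\ge L+\rho\alpha\int_{t/2}^s(s-r)^{\alpha-1}\,\phi(Y_r)\,dr.$$
A monotone Picard iteration — valid since $\phi$ is nondecreasing and the kernel nonnegative — then yields $Y_s\ge u_s$ on $[t/2,t\wedge\tau_n)$, where $u$ is the minimal nondecreasing solution of $u_s=L+\rho\alpha\int_{t/2}^s(s-r)^{\alpha-1}\phi(u_r)\,dr$, $u_{t/2}=L$. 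As $Y$ is continuous (hence finite) on $[0,t]$ while $u$ blows up at some finite time $T^*(L)$, on $G$ one must have $\tau_n\le T^*(L)$, so everything reduces to proving $T^*(L)<t$ for $L$ large.

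The core of the argument is then the deterministic estimate $T^*(L)-t/2\to 0$ as $L\to\infty$. Since $u$ is nondecreasing, once it reaches $2^k L$ at time $s_k$ it stays above, so $u_s\ge L+\rho\,\phi(2^kL)(s-s_k)^\alpha$ for $s\ge s_k$, giving $s_{k+1}-s_k\le(2/\rho)^{1/\alpha}\big(2^kL/\phi(2^kL)\big)^{1/\alpha}$; summing the dyadic blocks and comparing with an integral,
$$T^*(L)-t/2\le(2/\rho)^{1/\alpha}\sum_{k\ge0}\Big(\frac{2^kL}{\phi(2^kL)}\Big)^{1/\alpha}\lesssim\Big(\frac{L}{\phi(L)}\Big)^{1/\alpha}+\int_L^{+\infty}\Big(\frac{w}{\phi(w)}\Big)^{1/\alpha}\frac{dw}{w}.$$
The second term is the tail of the convergent integral \eqref{eq:asnG} and vanishes as $L\to\infty$; for the first, \eqref{eq:asnG} forces $\phi(w)/w\to\infty$ (if $\phi(w_n)\le Mw_n$ along some $w_n\to\infty$, monotonicity of $\phi$ makes $\int_{w_n}^{w_{n+1}}(w/\phi(w))^{1/\alpha}\,dw/w$ bounded below by a fixed positive constant, contradicting convergence), so it vanishes too. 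Choosing $L$ with $T^*(L)<t$ closes the argument. I expect the main obstacle to be exactly this deterministic step — establishing that the memory kernel $(s-r)^{\alpha-1}$ does not destroy the Osgood-type finite-time blow-up encoded by \eqref{eq:asnG}, and that the bound survives the one-sided estimates $K\ge\alpha(s-r)^{\alpha-1}$, $\sigma\ge\sigma_0$ — whereas the localized Girsanov bookkeeping and the positivity of the Gaussian event $G$ are comparatively routine.
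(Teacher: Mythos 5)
Your argument for part (2) is correct and is essentially the paper's own proof: localize, apply Girsanov so that $Y$ becomes a Volterra super-solution with drift $\rho\int_0^\cdot K(\cdot,r)\sigma(r,Y_r)\,dr$, force $\widehat{Y}^n$ above a high level on an event of positive, $n$-independent probability, and show the resulting deterministic Volterra equation blows up before $t$, which is incompatible with the continuity of $Y$ unless $\tau_n<t$. Your dyadic blow-up estimate is the paper's Lemma \ref{lem:expl} with ratio $R=2$ instead of $R\downarrow1$; the extra boundary term $(L/\phi(L))^{1/\alpha}$ this produces is handled correctly (your claim that \eqref{eq:asnG} forces $\phi(w)/w\to\infty$ is right, though the integration should be over e.g. $[w_n/2,w_n]$, or over $[w_n,w_{n+1}]$ after passing to a subsequence with $w_{n+1}\ge 2w_n$). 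Two harmless deviations: you localize with the integrated variance rather than with $\tau_n=\inf\{s:Y_s=n\}$, and your support step only needs the Cameron--Martin theorem (positive mass of uniform balls around $y^f$ with $f\equiv c$), which is in fact slightly lighter than the full-support Proposition \ref{prop:Support} invoked in the paper.

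The genuine gap is that part (1) --- the true martingale property for $\rho\le 0$ --- is nowhere addressed; as written you prove only half the theorem. It does follow quickly from the identity you already set up, $\E[S_t]/S_0=\lim_n\widetilde{\P}_n(\tau_n>t)$: when $\rho\le0$ the drift term is nonpositive, so $Y_s\le\widehat{Y}^n_s$ for $s\le\tau_n$, hence $\sigma(s,Y_s)\le\sup_{[0,t]\times(-\infty,M]}\sigma$ with $M:=\sup_{[0,t]}\widehat{Y}^n$, and therefore
$\widetilde{\P}_n(\tau_n\le t)\le\P\bigl(t\cdot\sup_{[0,t]\times(-\infty,\,\sup_{[0,t]}Y]}\sigma^2\ge n\bigr)\to 0$,
using that $\widehat{Y}^n$ under $\widetilde{\P}_n$ has the law of $Y$ under $\P$ and that $Y$ has continuous (hence bounded) paths on $[0,t]$. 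This is exactly how the paper argues, with its choice of stopping time making the bound immediate from the assumed boundedness of $\sigma$ on $[0,T]\times(-\infty,n]$. You should add this step explicitly.
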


\vspace{3mm}

The second result of this note deals with the moments of the stock price. We show that, under a similar assumption (satisfied by the rough Bergomi model), for each value of $\rho  \in (-1,0]$, some of the higher order moments are infinite.

\begin{theorem} \label{thm:mom}

Assume that there exists $T_0>0$  s.t. for some $\alpha > \frac 1 2$, $K(s,t) = \alpha (t-s)^{\alpha-1}$ for all $0\leq s \leq t \leq T_0$ and $\sigma = \sigma_0
$ on $[0,T_0] \times \R$ with $\sigma_0$ as in Theorem \ref{thm:mart} (2). 
Then if $\rho \leq 0$, $m> 1$ are such that $\rho^2 <\frac{m-1}{m}$, it holds that for all $t>0$, $\E[S_t^m]=+\infty$.
\end{theorem}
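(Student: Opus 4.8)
The plan is to first linearise the problem by a change of measure, reducing the statement to a lower bound for the expectation of an exponential functional of the Gaussian process, and then to produce that lower bound from a single near-optimal large deviation path.

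\textbf{Step 1 (reduction).} Writing $S_t=S_0\exp(\int_0^t\sigma\, dW-\tfrac12\int_0^t\sigma^2)$ with $\sigma=\sigma(s,Y_s)$, one has
$$S_t^m=S_0^m\,\mathcal E\!\Big(m\!\int_0^\cdot\!\sigma\, dW\Big)_t\exp\Big(\tfrac{m(m-1)}2\int_0^t\sigma(s,Y_s)^2\, ds\Big).$$
Since $\rho\le0$, Theorem~\ref{thm:mart}(1) applied with volatility $m\sigma$ shows that $\mathcal E(m\int_0^\cdot\sigma\, dW)$ is a true martingale on $[0,T_0]$; let $\Q$ be the associated measure on $\F_{T_0}$. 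By Girsanov, under $\Q$ the processes $W-m\int_0^\cdot\sigma\, du$ and $\bar W$ are independent Brownian motions, so $Z=Z^\Q+m\rho\int_0^\cdot\sigma\, du$ for a $\Q$-Brownian motion $Z^\Q$, and for $t\le T_0$
$$\E[S_t^m]=S_0^m\,\E^\Q\Big[\exp\Big(\tfrac{m(m-1)}2\int_0^t\sigma_0(s,Y_s)^2\, ds\Big)\Big],\qquad Y_s=G_s+m\rho\int_0^s \alpha(s-u)^{\alpha-1}\sigma_0(u,Y_u)\,du,$$
where $G_s=\int_0^s\alpha(s-u)^{\alpha-1}\, dZ^\Q_u$ is a centred Gaussian Volterra process. (When $\rho=0$ the drift vanishes and the argument below simplifies; the condition $\rho^2<\tfrac{m-1}m$ becomes just $m>1$.) For $t>T_0$ the claim follows from the case $t=T_0$: since $\rho\le0$, $S$ is a true martingale by Theorem~\ref{thm:mart}(1), so $S^m$ is a submartingale and $\E[S_t^m]\ge\E[S_{T_0}^m]$.

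\textbf{Step 2 (a large deviation lower bound).} It remains to show $\E^\Q[\exp(\tfrac{m(m-1)}2\int_0^t\sigma_0(\cdot,Y)^2)]=+\infty$ for every $t\in(0,T_0]$. Let $\mathcal K$ be the Volterra operator $f\mapsto\int_0^\cdot\alpha(\cdot-u)^{\alpha-1}f(u)\,du$ (so $\mathcal K^{-1}=\tfrac1{\alpha\Gamma(\alpha)}D^\alpha$). Fix a smooth profile $\chi\colon[0,t]\to[0,1]$, $\chi(0)=0$, $\chi\equiv1$ on $[\epsilon_1,t]$, take the target path $Y^\star=N\chi$ with $N$ large, and let $h$ be the Cameron--Martin shift of $Z^\Q$ with $\dot h=\mathcal K^{-1}Y^\star+m|\rho|\,\sigma_0(\cdot,Y^\star_\cdot)$, chosen so that the \emph{deterministic} Volterra equation with input $\mathcal K\dot h$ has solution exactly $Y^\star$. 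On an event $\{\|G-\mathcal K\dot h\|_\infty\le\delta_N\}$, a Gronwall/comparison estimate for the (dissipative, because $m\rho\le0$ and $\sigma_0$ is nondecreasing) Volterra equation gives $Y_s\ge N\chi(s)-o(1)$ on $[0,t]$, hence $\int_0^t\sigma_0(s,Y_s)^2\,ds\ge\int_0^t\sigma_0(s,N\chi(s))^2\,ds\,(1-o(1))$; meanwhile Anderson's inequality yields $\Q(\|G-\mathcal K\dot h\|_\infty\le\delta_N)\ge e^{-\|\dot h\|_{L^2}^2/2}\,\Q(\|G\|_\infty\le\delta_N)$. Combining these,
$$\E^\Q\Big[\exp\big(\tfrac{m(m-1)}2\textstyle\int_0^t\sigma_0(\cdot,Y)^2\big)\Big]\ \ge\ \exp\Big(\tfrac{m(m-1)}2\int_0^t\sigma_0(s,N\chi(s))^2\,ds-\tfrac12\|\dot h\|_{L^2}^2+o\big(\|\dot h\|_{L^2}^2\big)\Big).$$

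\textbf{Step 3 (the threshold).} Since $\|\dot h\|_{L^2}\le\|\mathcal K^{-1}Y^\star\|_{L^2}+m|\rho|\,\|\sigma_0(\cdot,Y^\star_\cdot)\|_{L^2}$ and $\|\mathcal K^{-1}Y^\star\|_{L^2}=N\|\mathcal K^{-1}\chi\|_{L^2}$, the crucial point is that, because $\alpha>\tfrac12$ and because of the superlinear growth quantified by \eqref{eq:asnG}, the ``kinetic'' contribution $N\|\mathcal K^{-1}\chi\|_{L^2}$ is of strictly lower order than $\|\sigma_0(\cdot,N\chi)\|_{L^2}$ as $N\to\infty$ (and, to keep the constants sharp for a $t$-dependent $\sigma_0$, one should replace $N\chi$ by a target along which $\sigma_0(s,\cdot)$ is essentially independent of $s$). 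Therefore $\tfrac12\|\dot h\|_{L^2}^2=\tfrac{m^2\rho^2}2\int_0^t\sigma_0(s,N\chi(s))^2\,ds\,(1+o(1))$, and the exponent in Step~2 equals
$$\tfrac{m^2}2\Big(\tfrac{m-1}m-\rho^2\Big)\int_0^t\sigma_0(s,N\chi(s))^2\,ds\;(1+o(1)),$$
which tends to $+\infty$ as $N\to\infty$ precisely when $\rho^2<\tfrac{m-1}m$; hence $\E[S_t^m]=+\infty$.

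The bulk of the work lies in Step~2--3. One obstacle is the comparison estimate controlling $Y$ around the large target despite the drift, when the local Lipschitz constant of $\sigma_0$ grows with $N$: this dictates how small $\delta_N$ must be, and one must verify that the small--ball probability $\Q(\|G\|_\infty\le\delta_N)$ does not decay faster than $e^{o(\|\dot h\|_{L^2}^2)}$, which relies on $\alpha>\tfrac12$. The other, and main, obstacle is obtaining the \emph{sharp} constant $\tfrac{m-1}m$ rather than a smaller one: this forces a near-optimal choice of the target path, and \eqref{eq:asnG} is used exactly as the statement that one can make $\|\sigma_0(\cdot,Y^\star_\cdot)\|_{L^2}$ arbitrarily large while keeping $\|\mathcal K^{-1}Y^\star\|_{L^2}$ of negligible order.
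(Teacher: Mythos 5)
Your Step 1 and the cost/gain bookkeeping of Step 3 are fine and would, modulo Step 2, give the sharp threshold $\rho^2<\frac{m-1}{m}$; the route is genuinely different from the paper's, which after the same Girsanov reduction invokes the Bou\'e--Dupuis formula and exhibits a feedback control $v=\gamma\sigma(\cdot,Y^v)$ with $\rho m+\gamma>0$ and $m^2-m-\gamma^2>0$, under which the controlled Volterra equation blows up in finite time with positive probability, the divergence coming from blow-up rather than from a Cameron--Martin energy balance. The reason the paper's detour through a control problem pays off is exactly the point where your sketch has a genuine gap. With the feedback control the effective drift coefficient $\rho m+\gamma$ is positive, so the nonlinearity is \emph{nondecreasing} and the one-sided comparison and explosion results for Volterra equations (Proposition \ref{prop:VoltMonotone}(2), Lemma \ref{lem:expl}) apply directly. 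In your Step 2 the drift is $m\rho\,\sigma_0$ with $m\rho\le 0$, i.e.\ a \emph{nonincreasing} nonlinearity, and for equations with memory neither the comparison principle nor the ODE-style dissipativity estimate $\|Y-Y^\star\|_\infty\lesssim\|G-\mathcal K\dot h\|_\infty$ is available off the shelf: at the first time the error reaches a positive maximum, the convolution $\int_0^{t}K_\alpha(t-s)\bigl(\sigma_0(s,Y_s)-\sigma_0(s,Y^\star_s)\bigr)\,ds$ can be driven very negative by earlier times at which the error was negative, so the maximum-principle argument that works for ODEs breaks down (and for $\alpha>1$ the kernel is not even completely monotone). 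This stability estimate, uniform in the level $N$, is the missing idea, not a routine verification.

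The fallback you propose --- a Gronwall/Mittag--Leffler bound involving the local Lipschitz constant $L_N$ of $\sigma_0$ at level $N$, compensated by a small-ball estimate --- quantitatively fails for the superlinear $\sigma_0$ the theorem is about. Gronwall gives $\|Y-Y^\star\|_\infty\le \delta_N\exp\bigl(c\,L_N^{1/\alpha}t\bigr)$, and to keep the realized gain within a factor close to $1$ of $\int\sigma_0(\cdot,N\chi)^2$ (needed even for the non-sharp conclusion, since a fixed multiplicative loss in the gain must still be beaten by the fixed margin in $\rho^2<\frac{m-1}{m}$) one needs $\delta_N\lesssim\exp\bigl(-c\,L_N^{1/\alpha}t\bigr)$. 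For rough Bergomi $L_N\sim\eta e^{\eta N}$, so $\delta_N$ must be doubly exponentially small in $N$, and then $\Q\bigl(\|G\|_{\infty}\le\delta_N\bigr)\approx\exp\bigl(-c\,\delta_N^{-1/(\alpha-1/2)}\bigr)$ is far below the required $\exp\bigl(-o(\|\dot h\|_{L^2}^2)\bigr)=\exp\bigl(-o(e^{2\eta N})\bigr)$. So the two requirements you list in your closing paragraph are mutually incompatible as stated. The strategy itself is viable --- it is essentially the ``different method'' of \cite{Gul19+} alluded to in the introduction --- but the $N$-uniform comparison estimate for the dissipative Volterra equation is the substantive content of such a proof and is not supplied here; the paper's construction is designed precisely to avoid ever having to track the process near a prescribed large target.
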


The finiteness of moments is important for instance in Monte Carlo simulation (to know that the Monte Carlo error is ruled by CLT estimates, finite variance is  needed) and in asymptotic formulae (to go from stock price large deviations to call price asymptotics, see for instance \cite[section 4.2]{FGP18+}). It would therefore be very useful to obtain a converse (positive) result.

\vspace{1mm}
We remark that in the Brownian case ($K \equiv 1$), both of the above results are well known, cf. \cite{Sin98,Jou04,LM07}, and in that case the condition $\rho^2 >\frac{m-1}{m}$ is also a sufficient condition for the moments to be finite. In the case of the rough Heston model, \cite{GGP18+} have recently obtained some results on moments of the stock price (which are similar in spirit to those for classical Heston, and therefore quite different from ours). {Finally, we would like to point out that a more general version of Theorem \ref{thm:mom} has been obtained independently and by a different method in \cite{Gul19+}.}

\vspace{1mm}
The remainder of this note is devoted to the proofs of Theorems \ref{thm:mart} and \ref{thm:mom}, which we now outline. The proof of Theorem \ref{thm:mart} follows the classical argument (found already in the aforementioned \cite{Sin98,Jou04,LM07}, see also \cite{BE09,Ruf15} for additional references) relating the martingale property of stochastic exponentials with explosions of a SDE (in our case, this will be a Volterra SDE). The martingale property (1) is then essentially immediate, while the proof of (2) follows from the fact that the Volterra SDE may blow up in arbitrarily short time with positive probability.  

The proof of Theorem \ref{thm:mom} relies on the Bou\'e-Dupuis formula, which expresses the expectation of exponentials of Brownian functionals as values of (here : Volterra) stochastic control problems. We then show that for the considered values of the parameters, we may choose a feedback control such that, as in the previous proof, the process (and the value) blow up in arbitrarily small time. This proof is new even in the classical (Markovian) case.

\section{Proofs}

\subsection{Preliminaries}

\subsubsection{Volterra integral equations}

In this subsection, we fix 
$$K_\alpha(r) = \alpha r^{\alpha-1} \mbox{ for some }\alpha >0,$$
$$z : [0,\infty) \to \R \mbox{ continuous}$$
$$b:[0,\infty)\times \R \to \R_+ \mbox{ continuous }$$
and consider the Volterra equation
\begin{equation} \label{eq:VoltDem}
y(t) = z(t) + \int_0^t K_\alpha(t-s) b(s,y(s)) ds, \;\;\ t \geq 0,
\end{equation}
of unknown $y$.

We will use the following results.

\begin{proposition} \label{prop:VoltLip}
Assume that $b$ is Lipschitz continuous in $x$, uniformly in $t \in [0,T]$, for each $T>0$. Then \eqref{eq:VoltDem} admits a unique continuous solution $y$ on $[0,\infty)$.
\end{proposition}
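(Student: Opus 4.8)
The plan is to run a standard Picard iteration, but with the twist that the kernel $K_\alpha(r) = \alpha r^{\alpha-1}$ is singular at $0$ when $\alpha < 1$, so the usual one-step Gr\"onwall/contraction argument must be replaced by an iterated convolution estimate. Fix $T>0$ and work on $C([0,T])$ with the sup norm (weighted if convenient). Define the map $\Phi$ by $(\Phi y)(t) = z(t) + \int_0^t K_\alpha(t-s) b(s,y(s))\,ds$. One first checks $\Phi$ maps $C([0,T])$ into itself: since $b$ is continuous and $y$ is bounded on $[0,T]$, $s \mapsto b(s,y(s))$ is bounded, and $\int_0^t (t-s)^{\alpha-1}\,ds = t^\alpha/\alpha < \infty$, so the integral is finite; continuity in $t$ follows from dominated convergence together with the integrability of the singularity (this is where $\alpha>0$ is used).

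Next, for $y_1,y_2 \in C([0,T])$ let $L$ be the Lipschitz constant of $b$ in $x$ on $[0,T]$. Then
$$|(\Phi y_1)(t) - (\Phi y_2)(t)| \leq \alpha L \int_0^t (t-s)^{\alpha-1} |y_1(s)-y_2(s)|\,ds.$$
Iterating this bound $n$ times, one gets a convolution of $n$ copies of $r^{\alpha-1}$, which by the Beta-function identity equals a constant times $t^{n\alpha-1}$; concretely $\|\Phi^n y_1 - \Phi^n y_2\|_{\infty,[0,T]} \leq \frac{(\alpha L \Gamma(\alpha))^n}{\Gamma(n\alpha+1)} T^{n\alpha} \|y_1-y_2\|_{\infty,[0,T]}$ (constants up to the exact normalization). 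Since $\Gamma(n\alpha+1)$ grows superexponentially, this coefficient tends to $0$, so some iterate $\Phi^n$ is a strict contraction on $C([0,T])$; hence $\Phi$ has a unique fixed point on $[0,T]$. Since $T$ was arbitrary and the solutions on $[0,T]$ and $[0,T']$ for $T<T'$ must agree on $[0,T]$ by uniqueness, the solutions patch together into a unique continuous solution on $[0,\infty)$.

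The only mildly delicate point — the ``main obstacle'' such as it is — is handling the singular kernel cleanly: one must verify both that $\Phi$ preserves continuity at $t=0$ despite the singularity and that the iterated-kernel estimate has the claimed $1/\Gamma(n\alpha+1)$ decay rather than merely an exponential bound (an ordinary Gr\"onwall step alone does not give a contraction when $\alpha<1$, because $\int_0^t(t-s)^{\alpha-1}ds$ need not be small for small $t$ in the right way after a single iteration — actually $t^\alpha/\alpha$ is small, so a single weighted-norm argument can also be made to work; either route is routine). Everything else is textbook Picard–Lindel\"of adapted to fractional integral operators, and since we only need $b \geq 0$ continuous and Lipschitz in $x$, no further structure of $b$ enters.
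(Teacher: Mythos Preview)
Your argument is correct: the Picard iteration with the iterated Beta-function estimate $\|\Phi^n y_1-\Phi^n y_2\|_\infty \le \dfrac{(\alpha L\Gamma(\alpha))^n}{\Gamma(n\alpha+1)}T^{n\alpha}\|y_1-y_2\|_\infty$ is the standard way to handle the weakly singular kernel, and the patching over $T$ is fine. (Your aside that a single weighted-norm contraction also works is correct too, since $\int_0^t K_\alpha = t^\alpha \to 0$ as $t\downarrow 0$.)

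The paper, by contrast, does not give a self-contained argument: it simply remarks that uniqueness is easy to check directly and refers existence to a general result in the monograph of Gripenberg--Londen--Staffans. Under the hood that reference is proved by essentially the same fixed-point/iterated-kernel machinery you wrote out, so the two approaches coincide in substance; the difference is only that you supply the details while the paper outsources them. Your version buys self-containment and makes explicit where $\alpha>0$ and the Lipschitz constant enter; the paper's version buys brevity and covers more general kernels without extra work.
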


\begin{proof}
Uniqueness is easy to check directly, and existence follows from \cite[Theorem 12.2.8]{GLS}.
\end{proof}

\begin{proposition} \label{prop:VoltMonotone}
Assume that $b$ is nondecreasing in $x$ and locally Lipschitz in $x$ (uniformly in $t\in [0,T]$ for each $T>0$).

Then :
\begin{enumerate}
\item there exists a unique pair $(y,T_\infty)$ such that $y$ is a continuous solution of \eqref{eq:VoltDem} on $[0,T_\infty)$, and $\lim_{t \to T_\infty} y(t) = +\infty$.
\item Let $T< T_\infty$, $u$ $:$ $[0,T] \to \R$ continuous such that 
$$u(t) \leq \mbox{(resp. } \geq \mbox{)  }  z(t) + \int_0^t K_\alpha(t-s) b(s,u(s)) ds ,\; \forall 0 \leq t \leq T.$$
Then $u \leq y$ (resp. $u \geq y$) on $[0,T]$.
\item Assume that $t\mapsto z(t)$ and $t\mapsto b(t,x)$ are nondecreasing (for each $x \in \R$). Then so is $t \mapsto y(t)$.
\end{enumerate}
\end{proposition}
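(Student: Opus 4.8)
The plan is to establish (1) by a localization/continuation argument and then to deduce the comparison principle (2) by a perturbation trick and (3) directly from (2). For (1), since $b$ is locally Lipschitz in $x$ uniformly on compact time intervals, the truncated coefficients $b^{(N)}(t,x):=b(t,(x\vee(-N))\wedge N)$ are globally Lipschitz in $x$ (uniformly in $t\in[0,T]$ for each $T$), so Proposition~\ref{prop:VoltLip} provides unique continuous solutions $y^{(N)}$ of the truncated equations. Using $b\ge 0$ and $K_\alpha\ge 0$ one has $y^{(N)}\ge z$, hence $y^{(N)}$ is bounded below on compacts; consequently, for $N$ large the lower truncation never acts, and $y^{(N)}$ coincides with the genuine solution of \eqref{eq:VoltDem} up to the time $\tau_N:=\inf\{t:y^{(N)}(t)\ge N\}$. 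A routine local uniqueness argument (localize $b$ near the graph of any putative solution and apply Proposition~\ref{prop:VoltLip}) shows the family $(y^{(N)})$ is consistent and $(\tau_N)$ nondecreasing; setting $T_\infty:=\lim_N\tau_N$ and pasting gives the solution $y$ on $[0,T_\infty)$, and local uniqueness forces uniqueness of the pair $(y,T_\infty)$, a pair with strictly smaller lifetime being impossible since it would have to extend continuously past its blow-up time. When $T_\infty<\infty$, $y$ cannot stay bounded near $T_\infty$ (otherwise $b(\cdot,y(\cdot))$ would be bounded there, the convolution term in \eqref{eq:VoltDem} would extend continuously to $T_\infty$, and the equation could be restarted, contradicting maximality); since $y\ge z$ is bounded below this gives $\limsup_{t\uparrow T_\infty}y(t)=+\infty$, and to upgrade it to a genuine limit I would apply Fatou's lemma to $\int_0^t K_\alpha(t-s)b(s,y(s))\,ds$ along $t\uparrow T_\infty$, whose $\liminf$ is $\ge\int_0^{T_\infty}K_\alpha(T_\infty-s)b(s,y(s))\,ds$: if the latter diverges we are done, and if it converges then the convolution term — and hence $y$ — has a finite limit at $T_\infty$, contradicting $\limsup y=+\infty$.

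For (2), consider first the subsolution case. For $\epsilon>0$ let $y_\epsilon$ solve \eqref{eq:VoltDem} with $z$ replaced by $z+\epsilon$; for $\epsilon$ small it exists on $[0,T]$ because $T<T_\infty$ and, by a Volterra--Gronwall estimate, $y_\epsilon$ stays uniformly close to $y$ (hence bounded) on $[0,T]$, so it cannot blow up before $T$. I claim $u<y_\epsilon$ on $[0,T]$: since $u(0)\le z(0)<z(0)+\epsilon=y_\epsilon(0)$, the first crossing time $t_1:=\inf\{t\le T:u(t)\ge y_\epsilon(t)\}$ is positive, and at $t_1$ (if $t_1\le T$) continuity gives $u(t_1)=y_\epsilon(t_1)$ while $u<y_\epsilon$ on $[0,t_1)$, whence, using $K_\alpha\ge0$ and the monotonicity of $b$ in $x$,
$$y_\epsilon(t_1)-u(t_1)\ \ge\ \epsilon+\int_0^{t_1}K_\alpha(t_1-s)\big(b(s,y_\epsilon(s))-b(s,u(s))\big)\,ds\ \ge\ \epsilon\ >\ 0,$$
a contradiction. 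Letting $\epsilon\to0$ (using again the stability $y_\epsilon\to y$ uniformly on $[0,T]$) yields $u\le y$. The supersolution case is symmetric: replace $z$ by $z-\epsilon$, use $b\ge0$ to keep the perturbed solution bounded below, and run the same first-crossing argument.

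Finally, (3) follows from (2). Fix $h>0$ and set $u(t):=y(t+h)$, defined and continuous on $[0,T]$ for $T<T_\infty-h$. Splitting the integral in \eqref{eq:VoltDem} at $h$, changing variables on $[h,t+h]$, and using $K_\alpha\ge0$, $b\ge0$ together with the assumed monotonicity of $z$ and of $b(\cdot,x)$ in $t$, one gets
$$u(t)=y(t+h)\ \ge\ z(t)+\int_0^t K_\alpha(t-s)\,b\big(s,y(s+h)\big)\,ds\ =\ z(t)+\int_0^t K_\alpha(t-s)\,b(s,u(s))\,ds,$$
so $u$ is a supersolution on $[0,T]$; by (2), $u\ge y$, i.e. $y(t+h)\ge y(t)$. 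Letting $T\uparrow T_\infty-h$ and recalling $h>0$ is arbitrary shows $y$ is nondecreasing on $[0,T_\infty)$.

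The conceptual core is the perturbation-plus-Gronwall argument in (2), which is exactly where the monotonicity of $b$ in $x$ is essential (it is what gives the integral term the right sign at a first crossing). The fiddliest point I anticipate is the blow-up characterization in (1) — turning $\limsup_{t\uparrow T_\infty}y(t)=+\infty$ into $\lim_{t\uparrow T_\infty}y(t)=+\infty$ — since it hinges on the behaviour of the fractional convolution near the endpoint $T_\infty$ and should be stated carefully (with the convention that the limit condition is the one characterizing a finite $T_\infty$).
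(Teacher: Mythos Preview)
Your proof is self-contained, whereas the paper's ``proof'' consists entirely of citations to \cite[Theorems~13.5.1 and~13.4.7]{GLS} and \cite[Theorem~2.6]{BY12}. So the approaches are not comparable in any interesting way; yours actually does the work. Parts (2) and (3) are clean and correct: the $\epsilon$-perturbation plus first-crossing argument in (2) is exactly the right way to exploit the monotonicity of $b$ in $x$ and the nonnegativity of $K_\alpha$, and deducing (3) from (2) via $u(t)=y(t+h)$ is efficient. One small point in (2): to get $y_\epsilon$ to exist on all of $[0,T]$ you should run the Gronwall estimate first for the \emph{truncated} equation (with $b^{(N)}$ for $N>\sup_{[0,T]}|y|+1$), conclude $|y_\epsilon^{(N)}-y|\le C_N\epsilon$, and then observe that for small $\epsilon$ the truncation is inactive; as written the argument is slightly circular.

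The one place where your sketch does not close is precisely the spot you flagged: upgrading $\limsup_{t\uparrow T_\infty}y(t)=+\infty$ to a genuine limit when $\alpha\in(0,1)$. Your Fatou dichotomy is fine in the divergent branch, but in the convergent branch you assert that finiteness of $\int_0^{T_\infty}K_\alpha(T_\infty-s)\,b(s,y(s))\,ds$ forces the convolution $t\mapsto\int_0^t K_\alpha(t-s)\,b(s,y(s))\,ds$ to have a finite limit at $T_\infty$. For $\alpha\ge 1$ this follows by dominated convergence (since $K_\alpha(t-s)\le K_\alpha(T_\infty-s)$), but for $\alpha\in(0,1)$ the kernel is singular and \emph{decreasing} in $r$, so no obvious dominating function is available and the convolution could in principle oscillate. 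You will need an additional argument here --- for instance, exploiting the complete monotonicity of $K_\alpha$ or the resolvent structure of the Riemann--Liouville kernel, as in the references the paper cites --- to rule out $\liminf_{t\uparrow T_\infty}y(t)<+\infty$.
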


\begin{proof}
cf. \cite[Theorems 13.5.1 and 13.4.7]{GLS} and \cite[Theorem 2.6]{BY12}.
\end{proof}

We will also use the following lemma which gives an explicit upper bound on blow-up time for solutions to \eqref{eq:VoltDem}.

\begin{lemma} \label{lem:expl}
In the setting of Proposition \ref{prop:VoltMonotone}, assume that $t\mapsto z(t)$ is nondecreasing. Then with $T_{\infty}$ as in (1), it holds that for each $T>0$
\begin{equation}
T_\infty \wedge T \leq \inf_{x \geq 0} \left( h(x) + \frac{1}{\alpha}\int_x^\infty \left( \frac{w}{ \inf_{t \in [0,T]} b(t,w)}\right)^{\frac{1}{\alpha}} \frac{dw}{w} \right), 
\end{equation} 
where $h(x) = \sup \{t : z(t) \leq x \}$.
\end{lemma}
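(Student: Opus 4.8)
The plan is to reduce the inequality to an explicit blow‑up estimate for an \emph{autonomous} scalar Volterra equation, which is then handled by comparison. Fix $T>0$ and $x\ge 0$; since the right‑hand side is an infimum over $x$, it suffices to show
$$T_\infty\wedge T\ \le\ h(x)+\frac1\alpha\int_x^\infty\Big(\frac{w}{\underline b(w)}\Big)^{1/\alpha}\frac{dw}{w},\qquad \underline b(w):=\inf_{t\in[0,T]}b(t,w),$$
and we may assume $h:=h(x)<T$ and that the integral converges (otherwise the bound is $\ge T$, resp.\ $+\infty$). Note $\underline b$ is again nondecreasing and locally Lipschitz (with the same local constants as $b$), and since $z$ is continuous and nondecreasing, $z(h)=x$. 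For $h\le t<T_\infty\wedge T$ we then have, using $z(t)\ge x$ and $b(s,y(s))\ge\underline b(y(s))$ for $s\le t<T$, together with $y\uparrow$ (Proposition~\ref{prop:VoltMonotone}(3)),
$$y(t)\ \ge\ x+\int_h^t \alpha(t-s)^{\alpha-1}\,\underline b(y(s))\,ds .$$
Thus $u(r):=y(h+r)$ is, on $[0,(T_\infty\wedge T)-h)$, a supersolution of the autonomous equation with constant forcing $x$ and nonlinearity $\underline b$. Let $(\phi,\tau)$ be its maximal solution from Proposition~\ref{prop:VoltMonotone}(1); $\phi$ is nondecreasing and $\phi(r)\to+\infty$ as $r\uparrow\tau$. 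Proposition~\ref{prop:VoltMonotone}(2) gives $u\ge\phi$ on $[0,T']$ for every $T'<\min(\tau,(T_\infty\wedge T)-h)$, so if $\tau<(T_\infty\wedge T)-h$ we would get $y(h+r)=u(r)\to+\infty$ as $r\uparrow\tau$, contradicting finiteness of $y$ on $[0,T_\infty)$. Hence $T_\infty\wedge T\le h+\tau$, and it remains to prove the blow‑up bound $\tau\le T^\ast:=\frac1\alpha\int_x^\infty(w/\underline b(w))^{1/\alpha}\,\tfrac{dw}{w}$.

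For the blow‑up bound I would compare $\phi$ with the solution $\psi$ of the scalar ODE $\dot\psi=\alpha\,\psi^{1-1/\alpha}\,\underline b(\psi)^{1/\alpha}$, $\psi(0)=x$: separating variables shows $\psi$ explodes exactly at time $T^\ast$. By Proposition~\ref{prop:VoltMonotone}(2) it is enough to check that $\psi$ is a \emph{subsolution} of $\phi$'s Volterra equation,
$$\psi(r)\ \le\ x+\int_0^r \alpha(r-\sigma)^{\alpha-1}\,\underline b(\psi(\sigma))\,d\sigma,\qquad 0\le r<T^\ast,$$
for then $\phi\ge\psi$ on $[0,T')$ for all $T'<\min(\tau,T^\ast)$, forcing $\tau\le T^\ast$, and finally taking the infimum over $x$ completes the proof. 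Writing $\psi(r)-x=\int_0^r\dot\psi=\int_0^r\alpha\psi(\sigma)^{1-1/\alpha}\underline b(\psi(\sigma))^{1/\alpha}d\sigma$, the subsolution inequality is equivalent to
$$\int_0^r \underline b(\psi(\sigma))^{1/\alpha}\Big[\psi(\sigma)^{1-1/\alpha}-(r-\sigma)^{\alpha-1}\,\underline b(\psi(\sigma))^{1-1/\alpha}\Big]\,d\sigma\ \le\ 0 ,$$
which is exactly where the precise form $K_\alpha(r)=\alpha r^{\alpha-1}$ is used; one handles it by the change of variables $\sigma\mapsto\psi(\sigma)$ and the monotonicity of $\sigma\mapsto\underline b(\psi(\sigma))$ (which concentrates the weight on the region $\sigma$ near $r$, where the bracket is negative).

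I expect this subsolution verification to be the main obstacle: the bracket above is not of one sign, so the estimate is genuinely global in $r$ and the sharp constant $\tfrac1\alpha$ is delicate (and I would expect it to require the kernel exponent to satisfy $\alpha\le 1$, which is the regime relevant to rough volatility). A more robust — though a priori lossier — route, which may suffice, is to avoid $\psi$ altogether and iterate the elementary lower bound $\phi(r)\ge \underline b(\phi(r_0))\,(r-r_0)^\alpha$ (valid for $r_0\le r<\tau$, obtained by keeping only the portion $[r_0,r]$ of the integral) along a geometric sequence of levels $v_k=\lambda^k x$: this gives $\tau\le\lambda^{1/\alpha}\sum_{k\ge 0}(v_k/\underline b(v_k))^{1/\alpha}$, and recognizing the sum as a Riemann sum for $\int_x^\infty(w/\underline b(w))^{1/\alpha}\,\tfrac{dw}{w}$ and optimizing over $\lambda$ recovers an estimate of the stated type; obtaining the optimal constant then amounts to passing from this discrete comparison to the continuous one provided by the ODE $\psi$.
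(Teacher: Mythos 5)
Your reduction is correct and cleanly executed: using $z(t)\ge x$ for $t\ge h(x)$ (with $z$ continuous nondecreasing) to show that $u(r):=y(h+r)$ is a supersolution of the autonomous equation $\phi(r)=x+\int_0^r\alpha(r-\sigma)^{\alpha-1}\underline{b}(\phi(\sigma))\,d\sigma$, and then invoking Proposition \ref{prop:VoltMonotone}(2) to get $T_\infty\wedge T\le h(x)+\tau$, is a legitimate first step. The gap is that you stop at precisely the step that carries all the content of the lemma: the bound $\tau\le\frac1\alpha\int_x^\infty(w/\underline{b}(w))^{1/\alpha}\frac{dw}{w}$ with the specific constant $\frac1\alpha$. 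Your proposed route — showing that $\psi$ with $\dot\psi=\alpha\psi^{1-1/\alpha}\underline{b}(\psi)^{1/\alpha}$ is a subsolution of the Volterra equation — reduces this to an integral inequality whose integrand changes sign, and you explicitly leave its verification open ("the main obstacle"). That verification is not a routine computation (it holds, e.g., for constant $\underline{b}$ by subadditivity of $w\mapsto w^\alpha$ when $\alpha\le1$, and asymptotically for power laws via a Beta-function identity, but a proof for general nondecreasing $\underline{b}$ is genuinely missing), so the lemma is not proved. Your fallback via geometric levels $v_k=\lambda^k x$ is sound but, as you yourself note, after optimizing over $\lambda>1$ it only yields $\tau\le C(\alpha)\cdot\frac1\alpha\int_x^\infty(w/\underline{b}(w))^{1/\alpha}\frac{dw}{w}$ for some $C(\alpha)>1$, not the stated constant. (For the record, the way the lemma is used in Theorems \ref{thm:mart}(2) and \ref{thm:mom} only requires that the right-hand side can be made $<T$ by taking $x$ large, so a non-sharp constant would suffice for the applications, but not for the lemma as stated.)

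For comparison, the paper argues differently: it sets $T_n=\sup\{t:y(t)\le xR^n\}$ directly for the original $y$, extracts from the convolution restricted to $[T_{n-1},T_n]$ the bound $T_n-T_{n-1}\le\bigl(x(R^n-R^{n-1})/b_0(xR^{n-1})\bigr)^{1/\alpha}$, sums, and lets $R\downarrow1$ to recover the integral with constant $\frac1\alpha$. You should be aware that this step also hinges on a delicate point for $\alpha<1$: one needs $z(T_n\wedge t)$ plus the memory term $\int_0^{T_{n-1}}\alpha((T_n\wedge t)-s)^{\alpha-1}b_0(y(s))\,ds$ to dominate $y(T_{n-1})=xR^{n-1}$, which is not immediate since $(u-s)^{\alpha-1}$ decreases in $u$; so your instinct that the sharp constant is where the difficulty lies is well founded. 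To turn your proposal into a proof, either complete the subsolution verification (an approximation of $\underline{b}$ by nondecreasing step functions, for which the subadditivity argument works, is one plausible path for $\alpha\le1$), or prove and state the lemma with an unspecified constant $C(\alpha)$, which is all the rest of the paper needs.
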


\begin{proof}
We follow arguments from \cite{BY12}. By Proposition \ref{prop:VoltMonotone} (2), it suffices to consider the solution to \eqref{eq:VoltDem} when $b$ is replaced by $b_0:= \inf_{t \in [0,T]} b(t,\cdot)$.

We fix $x \geq 0$ such that $h(x)$ and $\int_x^\infty \left( \frac{w}{b_0(w)}\right)^{\frac{1}{\alpha}} \frac{dw}{w} $ are finite,  $R>1$, and for each $n \geq 0$ we let 
$$T_n = \sup\{ t : y(t) \leq x R^n \} \in (0,+\infty].$$
Note that $T_0 \leq h(x)$. We then have for $n \geq 1$, for each $t>T_{n-1}$
\begin{align*}
 y(T_n \wedge t) = x R^n &= z(T_n\wedge t) +  \int_0^{T_n\wedge t} \alpha ((T_{n}\wedge t)-s)^{\alpha - 1} b_0(y(s)) ds \\
 &\geq z(T_n\wedge t) + \int_{T_{n-1}}^{T_n\wedge t} \alpha ((T_{n}\wedge t)-s)^{\alpha - 1} b_0(y(s)) ds  \\
 &\geq xR^{n-1} +  \left(T_{n}\wedge t-T_{n-1}\right)^\alpha b_0( x R^{n-1})
\end{align*}
where we have used the monotonicity of $y$ (Proposition \ref{prop:VoltMonotone} (3)). This implies that if $T_{n-1}$ is finite, so is $T_n$, with
$$T_n \leq T_{n-1} + \left( \frac{x (R^n-R^{n-1})}{b(x R^{n-1})}\right)^{\frac{1}{\alpha}}.$$
Hence 
\begin{align*}
T_\infty &= T_0 + \sum_{n \geq 1} (T_n - T_{n-1}) \\
&\leq h(x) + \sum_{n \geq 1} \left( \frac{x R^n - xR^{n-1}}{b_0(x R^{n-1})}\right)^{\frac{1}{\alpha}} = h(x) + \sum_{n\geq 1} \frac{1}{\alpha} \int_{x R^{n-1}}^{x R^n} \frac{w^{\frac{1}{\alpha}-1}}{b_0(xR^{n-1})} dw.\end{align*}
We obtain the result by letting $R \downarrow 1$.
\end{proof}

\subsubsection{Stochastic convolutions}

We consider the following stochastic convolution
\begin{equation}
Y_{t} =\int_{0}^{t} K(t,s) dZ_{s}
\label{equ:sigma}
\end{equation} 
(recall that $(Z_t)_{t \geq 0}$ is a $\P$-Brownian motion).

We recall the well-known condition for $Y$ to be continuous (cf. e.g. \cite[Theorem 2.1]{GRR70}).

\begin{proposition} \label{prop:ContStoch}
Assume that 
$$\forall t >0, C_K(t,t):= \int_0^t K(t,s)^2 ds < \infty,$$
{and for $t,t'\geq 0$  let $C_K(t,t') = \int_0^{t \wedge t'} K(t,s)K(t',s) ds.$}
Assume that for each $T \geq 0$, letting
$$\theta_T(h) := \sup_{0\leq t, t' \leq T, |t-t'|\leq h} \left\{ C_K(t,t) + C_K(t',t') - 2 C_K(t,t')\right\}^{1/2}$$
it holds that { $\theta_T(0^+)=0$ and}
$$\int_{0^+} \sqrt{\ln(1/u)} d\theta_T(u) < \infty.$$
Then $Y$ admits a version with continuous sample paths.
\end{proposition}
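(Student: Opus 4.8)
The plan is to recognise $(Y_t)$ as a centered Gaussian process whose canonical metric is dominated by $\theta_T$, and then to run the classical Garsia--Rodemich--Rumsey (GRR) argument, of which Proposition~\ref{prop:ContStoch} is essentially a restatement; the cited \cite[Theorem 2.1]{GRR70} is exactly the tool. First I would record that, by the It\^o isometry, $Y$ is centered Gaussian with $\E[Y_t Y_{t'}] = C_K(t,t')$, so that for $t,t' \in [0,T]$
\[
\E\big[(Y_t - Y_{t'})^2\big] = C_K(t,t) + C_K(t',t') - 2 C_K(t,t') \leq \theta_T(|t-t'|)^2 .
\]
In particular $\theta_T(0^+)=0$ gives $L^2$-continuity of $t \mapsto Y_t$, hence (Doob) $Y$ admits a jointly measurable modification, which I use from now on so that the double integrals below make sense. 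One may also replace $\theta_T$ by a continuous, strictly increasing majorant, still vanishing at $0$ and still satisfying $\int_{0^+}\sqrt{\ln(1/u)}\,d\theta_T(u) < \infty$; this is a harmless normalisation.

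Fix $T>0$ and take $\Psi(u) = e^{u^2/4}-1$ (so $\Psi^{-1}(v) = 2\sqrt{\ln(1+v)}$) and $p = \theta_T$ in the GRR inequality. The key input is a uniform exponential-moment bound: for $t \neq t'$ the variable $(Y_t - Y_{t'})/\theta_T(|t-t'|)$ is Gaussian with variance $\leq 1$, and for a centered Gaussian $g$ of variance $\leq 1$ one computes $\E[e^{g^2/4}] = (1 - \tfrac12)^{-1/2} = \sqrt 2$. Hence by Tonelli
\[
\E\Big[\int_0^T\!\!\int_0^T \Psi\Big(\tfrac{|Y_t - Y_{t'}|}{\theta_T(|t-t'|)}\Big)\, dt\, dt'\Big] \leq (\sqrt 2 - 1)\,T^2 < \infty ,
\]
so this double integral, call it $B$, is a.s.\ finite. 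The GRR inequality (after the obvious rescaling from $[0,1]$ to $[0,T]$) then gives, for Lebesgue-a.e.\ $s,t \in [0,T]$,
\[
|Y_s - Y_t| \leq C \int_0^{|s-t|} \sqrt{\ln\!\big(1 + 4B/u^2\big)}\; d\theta_T(u)
\]
for a universal constant $C$.

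It remains to show the right-hand side tends to $0$ as $|s-t| \to 0$. For $u \in (0,1]$ one has $\ln(1+4B/u^2) \leq \ln(1+4B) + \ln(1+u^{-2}) \leq \ln(1+4B) + \ln 2 + 2\ln(1/u)$, so by subadditivity of $\sqrt{\cdot}$ the integrand is bounded, uniformly for $|s-t|\le 1$, by a random constant times $1 + \sqrt{\ln(1/u)}$. Since $\theta_T(h) = \int_0^h d\theta_T(u) \to \theta_T(0^+) = 0$ and $\int_0^h \sqrt{\ln(1/u)}\,d\theta_T(u) \to 0$ as $h \to 0$ (tail of a convergent integral of a nonnegative function), the bound above vanishes as $|s-t| \to 0$, uniformly in $s,t$. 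Thus $Y$ coincides Lebesgue-a.e.\ with a process that is a.s.\ uniformly continuous on $[0,T]$; this produces a continuous modification $\widetilde Y$ on $[0,T]$, and $\widetilde Y_t = Y_t$ a.s.\ for each fixed $t$ because $Y$ is continuous in probability. Letting $T \to \infty$ and observing that these modifications agree on overlaps, we obtain a modification of $Y$ with continuous sample paths on $[0,\infty)$.

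The main obstacle — beyond correctly quoting GRR and checking that the rescaling to $[0,T]$ introduces no spurious constants — is the last step: one must verify that \emph{finiteness of $\int_{0^+}\sqrt{\ln(1/u)}\,d\theta_T(u)$} (a Stieltjes integral against the modulus of the canonical metric, not against Lebesgue measure) is precisely what forces the GRR upper bound to be continuous at the diagonal. The measurability preliminary and the normalisation of $\theta_T$ are routine but should be stated explicitly.
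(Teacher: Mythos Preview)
Your argument is correct and follows exactly the route the paper indicates: the paper does not give a proof of this proposition but merely records it as a well-known consequence of the Garsia--Rodemich--Rumsey inequality, citing \cite[Theorem~2.1]{GRR70}. You have spelled out that application (Gaussian increments controlled by $\theta_T$, the exponential $\Psi$, finiteness of the GRR double integral, and the passage from the a.e.\ oscillation bound to a genuine continuous modification via $L^2$-continuity), so there is nothing to add or compare.
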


Note that the assumption above is satisfied for $K(t,s) = \alpha (t-s)^{\alpha-1}$,  any $\alpha > - \frac{1}{2}$.

We will also need a result on the support of the law of $Y$ { when $K$ is translation invariant.

\begin{proposition} \label{prop:Support}In addition to the assumptions of Proposition \ref{prop:ContStoch}, assume that $K(t,s)= \hat{K}(t-s)$ for all $0\leq s \leq t$, where $\hat{K}$ is a function s.t. $\int_0^\varepsilon |\hat{K}| >0$ for all $\varepsilon >0$. Then for each $T\geq 0$, the law of $Y$ has full support in $C_0^T :=\{y \in C([0,T],\R); y(0)=0\}$ (equipped with the topology of uniform convergence).
\end{proposition}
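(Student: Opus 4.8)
The plan is to deduce the full-support statement for the Gaussian process $Y$ from the structure of the stochastic convolution and a standard Cameron–Martin / reproducing-kernel argument. Since $Y$ is a centered continuous Gaussian process on $[0,T]$ with $Y_0 = 0$, its law on $C_0^T$ is a centered Gaussian measure, and by a classical fact (see e.g. the theory of Gaussian measures on Banach spaces) the topological support of such a measure equals the closure in the uniform norm of its Cameron–Martin space $\mathcal{H}$. So the claim reduces to showing that $\mathcal{H}$ is dense in $C_0^T$.

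\smallskip

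First I would identify $\mathcal{H}$ concretely. Because $Y_t = \int_0^t \hat K(t-s)\, dZ_s = \int_0^\infty \hat K(t-s)\mathbf{1}_{s\le t}\, dZ_s$ with $Z$ a Brownian motion, the map $L^2([0,T]) \ni g \mapsto \big(t \mapsto \int_0^t \hat K(t-s) g(s)\, ds\big)$ is (up to the usual identification) precisely the operator whose range is $\mathcal{H}$, with norm inherited from $L^2$. Thus $\mathcal{H} = \{ \hat K * g : g \in L^2([0,T])\}$, understood as the causal convolution $f(t) = \int_0^t \hat K(t-s) g(s)\, ds$. It therefore suffices to prove that convolutions $\hat K * g$ with $g \in L^2([0,T])$ (or even $g$ smooth and compactly supported in $(0,T)$) are uniformly dense in $C_0^T$.

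\smallskip

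For the density step I would use that, by Stone–Weierstrass or directly, it is enough to approximate a dense subclass of $C_0^T$, e.g. piecewise-linear or $C^1$ functions vanishing at $0$, and even enough to approximate each "ramp" $t \mapsto (t-a)_+$ for $a \in [0,T)$, since finite linear combinations of these are uniformly dense in $C_0^T$. The key analytic input is the hypothesis $\int_0^\varepsilon |\hat K| > 0$ for every $\varepsilon > 0$: this says $\hat K$ does not vanish a.e. near $0$, which prevents the causal convolution operator from having a "delay" and guarantees that the convolution semigroup generated by $\hat K$ can produce arbitrarily localized bumps. Concretely, if $\phi_\delta$ is an approximate identity supported in $[0,\delta]$, one would like to solve $\hat K * g_\delta \approx \phi_\delta$ in a suitable sense; because $\hat K$ is not identically zero on $[0,\delta]$, the convolution operator restricted to $L^2([0,\delta])$ is injective with dense range (a Titchmarsh-convolution-type argument), so one can build $g$ making $\hat K * g$ close to any target with a localized jump. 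Integrating/superposing such bumps recovers the ramps, hence all of $C_0^T$.

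\smallskip

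The main obstacle I anticipate is making the convolution-approximation quantitative and uniform: the causal convolution operator $g \mapsto \hat K * g$ is generally not boundedly invertible on $L^2$, so one cannot simply set $g = \hat K^{-1} * f$; one must instead exploit only the qualitative nondegeneracy $\int_0^\varepsilon|\hat K|>0$ to get density of the range in $C_0^T$, presumably via a Hahn–Banach argument (a signed measure $\mu$ on $[0,T]$ annihilating all $\hat K * g$, $g\in L^2$, forces $\hat K * \mu = 0$ as a function of $t$, and then a Titchmarsh support-of-convolution argument combined with $\int_0^\varepsilon|\hat K|>0$ forces $\mu = 0$). Handling the boundary behaviour near $t=0$ and confirming that the relevant dual pairing is with $C_0^T$ rather than $L^2$ (so that the annihilator is a measure, not merely an $L^2$ function) is the delicate point; once that is set up the rest is routine.
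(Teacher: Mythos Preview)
Your proposal is correct and follows exactly the paper's route: identify the law of $Y$ as a Gaussian measure on $C_0^T$, invoke the general fact that its support is the closure of the Cameron--Martin space, recognize the latter as $\{\hat K * g : g\in L^2([0,T])\}$, and then prove this set is dense in $C_0^T$. The only difference is that for the density step the paper simply cites an existing lemma (Lemma~2.1 in \cite{C08}), whereas you sketch its proof via Hahn--Banach and the Titchmarsh convolution theorem; your sketch is accurate (the annihilating measure is forced to be supported at $\{0\}$, hence trivial on $C_0^T$), so nothing is missing.
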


\begin{proof}
Note that the law of $Y$ is a Gaussian measure on $C_0^T$, with Cameron-Martin space
$$\mathcal{H}_K = \left\{ y^f, \;\; f \in L^2([0,T])\right\} \subset C_0^T$$
 where for $f \in L^2([0,T]$ we define
 $$y^f : t \mapsto \int_0^t \hat{K}(t-s) f(s) ds.$$
 By a classical application of the Cameron-Martin theorem (see e.g. \cite[Theorem 3.6.1]{bogachev}), the support of the law of $Y$ in $C_0^T$ is the closure of $\mathcal{H}_K$ in $C_0^T$. We then conclude with \cite[Lemma 2.1]{C08}.
%
\end{proof}
}

\subsection{Proof of Theorem \ref{thm:mart}}

Since $(S_t)$ is a nonnegative local martingale (hence supermartingale), it will be a martingale on $[0,T]$ if and only if $\E[S_T] = S_0$.

Letting $\tau_n = \inf\{ t>0, \;Y_t = n \}$, then since $\sigma$ is bounded on $[0,T]\times (-\infty,n]$ it holds that
$$S_0 = \E\left[ S_{T \wedge \tau_n} \right] = \E\left[ S_T  1_{\{T < \tau_n\}}\right] + \E\left[ S_{\tau_n}  1_{\{\tau_n \leq T\}}\right].$$
The first term converges to $\E[S_T]$ when $n$ goes to infinity, so that
\begin{equation} \label{eq:lim}
S_0 - \E\left[S_T\right] = \lim_{n \to \infty}  \E\left[ S_{\tau_n}  1_{\{\tau_n \leq T\}}\right].
\end{equation}
On the other hand, we can apply Girsanov's theorem to write
$$ \E\left[ S_{\tau_n}  1_{\{\tau_n \leq T\}}\right] = S_0 \hat{\P}_n\left(\tau_n \leq T \right)$$
where 
 $\hat{\P}_n$ is such that 
$$\hat{W}^{(n)}_t = W_t - \int_0^{t \wedge \tau_n} \sigma(s,Y_s) ds$$
is a Brownian motion under $\hat{\P}_n$.  
Note that for $t \leq  \tau_n$ one has
\begin{align*}
Y_t &= \int_{0}^{t}K(t,s) \left( d\hat{Z}^{(n)}_{s} + \rho \sigma(s,Y_s) ds \right) \\
&= \hat{Y}_t + \int_0^t K(t,s) \rho \sigma(s,Y_s) ds
\end{align*}
where $\hat{Z}^{(n)}$ is a $\hat{\P}_n$-Brownian motion, and
\[
\hat{Y}_t := \int_{0}^{t}K(t,s)d\hat{Z}^{(n)}_{s}.
\]

We first treat the case $\rho \leq 0$. Since $Y_t \leq \hat{Y}_t$ (for $t\leq \tau_n$) one then has $ \tau_n \geq \tau_n^0 := \inf\{ t>0, \; \hat{Y}_t = n \}$. In addition, since $\hat{Z}^{(n)}$ is a $\hat{\P}_n$-Brownian motion, one has
$$\lim_{n \to \infty} \hat{\P}_n(\tau^0_n \leq T)  = \lim_{ n \to \infty}  \P (\sup_{t \in [0,T]} Y_t \geq n) = 0,$$
and it follows that
$$S_0 - \E[S_T] = S_0  \lim_{n \to \infty} \hat{\P}_n(\tau_n \leq T) = 0$$
i.e. $S$ is a martingale.

%

We now treat the case $\rho > 0$. We then have for $t < \tau_n$
\begin{equation} \label{eq:lambdax}
Y_t \geq \hat{Y}_t +  \int_{0}^{t} \alpha \left( t-s\right)^{\alpha-1} \rho \sigma_{0}(s,Y_s) ds.
\end{equation}
In particular, by Proposition \ref{prop:VoltMonotone} (2) and the fact that $\hat{Z}^{(n)}$ is a Brownian motion under $\hat{P}_n$, one has 
$$\lim_{n \to \infty} \hat{\P}_n(\tau_n \leq  T) \geq \P(T_\infty < T)$$
where $T_\infty$ is the explosion time of the solution $\tilde{Y}$ to 
$$\tilde{Y}_t = Y_t +  \int_{0}^{t} \alpha \left( t-s\right)^{\alpha-1} \rho \sigma_{0}(s,Y_s) ds$$
(which exists and is unique $\P$-a.s. by Proposition \ref{prop:VoltMonotone}).

Let $x, \lambda >0$ be chosen such that 
$$\frac{x+1}{\lambda} + \int_{x} ^\infty \left(\frac{w}{\rho \cdot \inf_{t \in [0,T]} \sigma_0(t,w)}\right)^{\frac{1}{\alpha}} \frac{dw}{w}  < T.$$
Let $z_\lambda(t) = \lambda t -1$ and $y_\lambda$ be the solution to \eqref{eq:VoltDem} with $z=z_\lambda$ and $b(t,\cdot)=\sigma_0$. By Lemma \ref{lem:expl}, $y_\lambda$ blows up on $[0,T]$.

By Proposition \ref{prop:Support}, the event $\left\{ Y \geq z_\lambda \mbox{ on }[0,T]\right\}$ has positive probability under ${\P}$. But on this event, one has $\tilde{Y} \geq y_\lambda$ on $[0,T]$, and $T_\infty < T$. This proves that $\E[S_T] < S_0$.

\subsection{Proof of Theorem \ref{thm:mom}}

We again apply a Girsanov transformation : one has
\begin{align*}
\E\left[S_T^m\right] &=  S_0^m \;\E \left[ \exp\left(\int_0^T m \sigma(s,Y_s) dW_s-\int_0^T \frac{m}{2} \sigma^2(s,Y_s) ds\right)\right]\\
&= S_0^m \;  \hat{\E} \left[ \exp\left(\int_0^T \frac{m^2-m}{2} \sigma^2(s,Y_s) ds\right)\right],
\end{align*}
with 
$$\frac{d\hat{\P}}{d\P} = \exp\left(\int_0^T m \sigma(s,Y_s) dW_s-\int_0^T \frac{m^2}{2} \sigma^2(s,Y_s) ds\right)$$
(this defines a probability measure by Theorem \ref{thm:mart} (1) since $\rho\leq 0$), and we have that 
\[
Y_t = Y_0 + \int_0^t K_\alpha(t-s) (d\hat{W}_s + \rho m \sigma(s,Y_s) ds )\]
for a $\hat{\P}$-Brownian motion $\hat{W}$. Letting $Y^0_t = Y_0 + \int_0^t K_\alpha(t-s) d \hat{W}_s$, this is rewritten as
\begin{equation} \label{eq:YY0}
Y_t = Y_t^0 + \int_0^t K_\alpha(t-s) \rho m \sigma(s,Y_s) ds.
\end{equation}
Note that since $Y^0$ is $\hat{\P}$-a.s. continuous, combining Proposition \ref{prop:VoltLip} with the a priori bounds
$$ Y_t^0 + \rho m \left(\int_0^t K_\alpha\right) \sup_{s \in [0,t]}\sigma(s,\sup_{s\leq t} Y^0_s) \leq  Y_t \leq Y^0_t $$
one can show that \eqref{eq:YY0} admits $\hat{\P}$-a.s. a unique continuous solution.

By the Boué-Dupuis formula \cite[Theorem 5.1]{BD98}, this yields
\[ \ln {\E}\left[S_T^m/S^m_0\right]  = \sup_{(v_t)_{t\geq0} \in \mathcal{V}} \hat{\E} \left[ \int_0^T  \left(\frac{m^2-m}{2} \sigma^2(s,Y_s^v) - \frac{v_s^2}{2} ds\right) \right]
\]
where 
\[ \mathcal{V} = \left\{ (v_t)_{t \geq 0} \mbox{ progressively measurable with }\hat{\E}\left[ \int_0^T v_t^2 dt \right] < +\infty \right\} 
\]
and for $v \in \mathcal{V}$, $Y^v$ is the unique continuous solution to
\[Y_t^v = Y_0 + \int_0^t K_\alpha(t-s) (d\hat{W}_s + (\rho m \sigma(s,Y_s) + v_s) ds ).\]

On the other hand, if $\rho^2 < \frac{m-1}{m}$ one can find $\gamma$ such that
\[ \rho m + \gamma >0 ,\;\;\;m^2 - m - \gamma^2 >0.\]

The idea is then that taking the feedback control $v_s = \gamma \sigma(Y_s)$, using the first inequality, it holds that for each $T>0$ $Y^v$ has positive probability of blowing up before $T$. On the other hand, the second inequality ensures that the gain is $+\infty$ in this case, so that the value (and the moment) is infinite.

We now give a rigorous proof. We fix $A>0$, $n>0$, let $\theta_A = \inf\{t ; Y^0_t \geq A\}$ and define 
\[
v^{n,A}_s = \begin{cases}  
\gamma \sigma(s,Y_s^{n,A}) &\mbox{ if } (\rho m + \gamma) \sigma(s,Y_s^{n,A}) \leq n \mbox{ and } s \leq \theta_A, \\
n - \rho m \sigma(s,Y^{n,A}_s)  & \mbox{ if }  (\rho m + \gamma) \sigma(s,Y_s^{n,A}) > n \mbox{ and }s \leq \theta_A \\
0 & \mbox{ if } s > \theta_A
\end{cases}
\]
where $Y^{n,A}$ is the unique (by Proposition \ref{prop:VoltLip}) solution to
\[
Y_t^{n,A} = Y^0_t+ \int_0^{t \wedge \theta_A}  K_\alpha(t-s)     \left[(\rho m + \gamma) \sigma(s,Y^{n,A}_s) \wedge n\right] ds + \int_{t \wedge \theta_A}^t  K_\alpha(t-s) \rho m  \sigma(s,Y^{n,A}_s) ds
\]
{
Note that for all $t \in [0,T]$ one has $Y_t^{n,A} \leq Y^0_t + n  t^\alpha$  and also
\begin{equation} \label{eq:v}
 0 \leq v_t^{n,A} \leq \gamma \sigma(t,Y_t^{n,A}) 1_{t \leq \theta_A} \leq \gamma \sup_{t \in [0,T]} \sigma(t,A + n t^\alpha),
\end{equation}
so that in particular $v^{n,A} \in \mathcal{V}$.
}We therefore have
\begin{align}
\ln {\E}\left[S_T^m/S^m_0\right] & \geq  \hat{\E} \left[ \int_0^T  \left(\frac{m^2-m}{2} \sigma^2(s,Y_s^{n,A}) - \frac{(v_s^{n,A})^2}{2} ds\right) \right]  \nonumber \\
&\geq \hat{\E} \left[ 1_{\theta_A >T} \int_0^T  \frac{m^2-m - \gamma^2}{2} \sigma^2(s,Y_s^{n,A}) ds \right].\label{eq:An}
\end{align}
{
where we have used the second inequality in \eqref{eq:v}.}
Now as in the proof of Theorem \ref{thm:mart}, we fix $x,\lambda$ such that
$$\frac{x+1}{\lambda} + \int_{x} ^\infty \left(\frac{w}{(\rho m + \gamma) \inf_{t\in[0,T]} \sigma(t,w)}\right)^{\frac{1}{\alpha}} \frac{dw}{w}  < T,$$
let $z_\lambda(t) = \lambda t -1$ and for $n \in \N \cup \{\infty\}$ let $y_\lambda^n$ be the solution to \eqref{eq:VoltDem} with $z=z_\lambda$ and $b(t,\cdot)=(\rho m + \gamma) \sigma_0 \wedge n$. Note that $y_\lambda^\infty$ blows up in time $T_\infty< T$ by Lemma \ref{lem:expl}. By Proposition \ref{prop:VoltMonotone} (2), $y^n_\lambda$ is nondecreasing in $n$, and therefore for each $T_\infty < t < T$, $y^n_\lambda(t) \to_{n \uparrow \infty} +\infty$.

Fix $A = \lambda T +1 $. On the event $\{ z_\lambda \leq Y^0 \leq z_{\lambda} +1\}$, it holds that $\theta_A >T$, and by Proposition \ref{prop:VoltMonotone} (2), $Y^{n,A}_t \geq y_\lambda^n(t) \to +\infty$ on $[T_\infty,T]$. Letting $n \uparrow \infty$ in \eqref{eq:An} we obtain
\begin{align*}
 \ln \E\left[S_T^m/S_0^m\right]& \geq \infty \cdot \hat{\P}\left(z_\lambda < Y^0  < z_\lambda+1 \mbox{ on }[0,T] \right),
\end{align*}
and we can conclude since the above probability is nonzero by Proposition \ref{prop:Support}.

\bibliographystyle{plain}
\bibliography{roughmoments}

\begin{thebibliography}{10}

\bibitem{ALV07}
Elisa Al\`os, Jorge~A. Le\'{o}n, and Josep Vives.
\newblock On the short-time behavior of the implied volatility for
  jump-diffusion models with stochastic volatility.
\newblock {\em Finance Stoch.}, 11(4):571--589, 2007.

\bibitem{BFG16}
Christian Bayer, Peter Friz, and Jim Gatheral.
\newblock Pricing under rough volatility.
\newblock {\em Quant. Finance}, 16(6):887--904, 2016.

\bibitem{BE09}
Stefan Blei and Hans-J\"{u}rgen Engelbert.
\newblock On exponential local martingales associated with strong {M}arkov
  continuous local martingales.
\newblock {\em Stochastic Process. Appl.}, 119(9):2859--2880, 2009.

\bibitem{bogachev}
Vladimir~I. Bogachev.
\newblock {\em Gaussian measures}, volume~62 of {\em Mathematical Surveys and
  Monographs}.
\newblock American Mathematical Society, Providence, RI, 1998.

\bibitem{BD98}
Michelle Bou\'{e} and Paul Dupuis.
\newblock A variational representation for certain functionals of {B}rownian
  motion.
\newblock {\em Ann. Probab.}, 26(4):1641--1659, 1998.

\bibitem{BY12}
H.~Brunner and Z.~W. Yang.
\newblock Blow-up behavior of {H}ammerstein-type {V}olterra integral equations.
\newblock {\em J. Integral Equations Appl.}, 24(4):487--512, 2012.

\bibitem{C08}
Alexander Cherny.
\newblock Brownian moving averages have conditional full support.
\newblock {\em Ann. Appl. Probab.}, 18(5):1825--1830, 2008.

\bibitem{FGP18+}
P.~K. {Friz}, P.~{Gassiat}, and P.~{Pigato}.
\newblock {Precise asymptotics: robust stochastic volatility models}.
\newblock {\em ArXiv e-prints}, November 2018.

\bibitem{F11}
Masaaki Fukasawa.
\newblock Asymptotic analysis for stochastic volatility: martingale expansion.
\newblock {\em Finance Stoch.}, 15(4):635--654, 2011.

\bibitem{GRR70}
A.~M. Garsia, E.~Rodemich, and H.~Rumsey, Jr.
\newblock A real variable lemma and the continuity of paths of some {G}aussian
  processes.
\newblock {\em Indiana Univ. Math. J.}, 20:565--578, 1970/1971.

\bibitem{GJR18}
Jim Gatheral, Thibault Jaisson, and Mathieu Rosenbaum.
\newblock Volatility is rough.
\newblock {\em Quant. Finance}, 18(6):933--949, 2018.

\bibitem{GGP18+}
Stefan Gerhold, Christoph Gerstenecker, and Arpad Pinter.
\newblock Moment explosions in the rough heston model.
\newblock {\em Arxiv E-prints}, 2018.

\bibitem{GLS}
G.~Gripenberg, S.-O. Londen, and O.~Staffans.
\newblock {\em Volterra integral and functional equations}, volume~34 of {\em
  Encyclopedia of Mathematics and its Applications}.
\newblock Cambridge University Press, Cambridge, 1990.

\bibitem{Gul19+}
Archil {Gulisashvili}.
\newblock {Gaussian stochastic volatility models: Scaling regimes, large
  deviations, and moment explosions}.
\newblock {\em ArXiv e-prints}, page arXiv:1808.00421v7, January 2019.

\bibitem{Jou04}
B.~Jourdain.
\newblock Loss of martingality in asset price models with lognormal stochastic
  volatility.
\newblock {\em Preprint CERMICS 2004-267}, 2004.

\bibitem{LM07}
P.-L. Lions and M.~Musiela.
\newblock Correlations and bounds for stochastic volatility models.
\newblock {\em Ann. Inst. H. Poincar\'e Anal. Non Lin\'eaire}, 24(1):1--16,
  2007.

\bibitem{Pro13}
Philip Protter.
\newblock A mathematical theory of financial bubbles.
\newblock In {\em Paris-{P}rinceton {L}ectures on {M}athematical {F}inance
  2013}, volume 2081 of {\em Lecture Notes in Math.}, pages 1--108. Springer,
  Cham, 2013.

\bibitem{Ruf15}
Johannes Ruf.
\newblock The martingale property in the context of stochastic differential
  equations.
\newblock {\em Electron. Commun. Probab.}, 20:No. 34, 10, 2015.

\bibitem{Sin98}
Carlos~A. Sin.
\newblock Complications with stochastic volatility models.
\newblock {\em Adv. in Appl. Probab.}, 30(1):256--268, 1998.

\end{thebibliography}

\end{document}